\documentclass[onecolumn,superscriptaddress,aps,pra]{revtex4-2}

\usepackage{graphicx}
\usepackage{amsmath}
\usepackage{amsthm}
\usepackage{amssymb}
\usepackage{latexsym}
\usepackage{array}
\usepackage{hyperref}
\usepackage{amsfonts}
\usepackage{dsfont}
\usepackage{mathrsfs}
\usepackage{verbatim}
\usepackage{bbold}
\usepackage[normalem]{ulem}
\usepackage{upgreek}
\usepackage{makecell}
\usepackage{adjustbox}
\usepackage{algorithm}
\usepackage{algpseudocode}
\usepackage{color}
\usepackage{bm}
\usepackage{times}
\usepackage{booktabs}
\usepackage{multirow}

\newcommand{\bra}[1]{\left<#1\right|}
\newcommand{\ket}[1]{\left|#1\right>}
\newcommand{\abs}[1]{\left|#1\right|}

\newcommand{\braket}[2]{\left<{#1}|{#2}\right>}

\newcommand{\Var}{\mathrm{Var}}
\newcommand{\Prob}{\mathrm{Pr}}

\newtheorem{theorem}{Theorem}
\newtheorem{proposition}{Proposition}

\newtheorem{corollary}{Corollary}

\begin{document}

\title{Single-shot measurement learning as a self-certifying estimator for quantum-enhanced sensing}

\author{Jeongho~Bang}\email{jbang@yonsei.ac.kr}
\affiliation{Institute for Convergence Research and Education in Advanced Technology, Yonsei University, Seoul 03722, Republic of Korea}
\affiliation{Department of Quantum Information, Yonsei University, Incheon 21983, Republic of Korea}

\date{\today}

\begin{abstract}
Single-shot measurement learning (SSML) learns a compensation unitary from a one-bit success/failure record and halts after a prescribed run of consecutive successes. We recast SSML as an adaptive estimator on a parameterized sensing manifold and ask what role it can play in quantum-enhanced sensing. First, we show that the terminal run itself furnishes an intrinsic certificate of local alignment: longer terminal runs certify smaller infidelity, and near the optimum this becomes a Fisher-calibrated certificate of parameter error. Second, for compensation-type sensing families, the Bernoulli success/failure record is locally matched to the probe quantum Fisher information (QFI), so SSML preserves the probe's metrological content despite using only one classical bit per copy. In this sense, SSML makes the quantum enhancement carried by the probe operationally available in an online self-terminating protocol. Applied to GHZ/NOON probes of depth $m$, SSML retains the familiar square-root entanglement gain over product probes at fixed total resource, while an ideal multiscale architecture remains compatible with Heisenberg scaling. Monte Carlo simulations of photonic NOON-state phase sensing show the expected near-inverse decay of terminal infidelity with entangled shots, SQL-like total-resource scaling at fixed entanglement depth, the corresponding fixed-resource entanglement gain, the global limitation of a single fringe scale, and the recovery of Heisenberg-compatible behavior under ideal multiscale hand-off. These results identify SSML as a Fisher-preserving, self-certifying estimator layer for quantum-enhanced sensing.
\end{abstract}

\maketitle

%---------------------------------------------------------------------------------------------------------------------------------
\section*{Introduction}
%---------------------------------------------------------------------------------------------------------------------------------

Quantum sensing is usually introduced as a story about probe states. Product probes saturate the standard quantum limit (SQL), entangled probes can increase the quantum Fisher information (QFI), and ideal multiscale or error-free strategies can approach the Heisenberg resource law~\cite{Giovannetti2004,Giovannetti2006,Degen2017,Braun2018,Pezze2018}. Yet, the probe design is only half of the metrological narrative. In a laboratory, the experimenter must also decide how to update controls from a stream of measurement outcomes, how to tell whether a lock has been reached, and when the protocol has acquired enough evidence to stop. A metrological architecture is therefore not specified by its QFI alone; it also needs an estimator layer that can turn the measurement outcomes into a reliable and operationally certified answer~\cite{Paris2009,Wiseman1995,Higgins2007,Hentschel2010}.

Single-shot measurement learning (SSML) is a particularly economical candidate for such an estimator layer~\cite{Lee2018,Lee2021,Bang2026run}. Each fresh copy of an unknown state is processed by a tunable unitary, projected onto a fixed fiducial test, and reduced to a single classical bit: success or failure. The controller stores only a tiny internal memory, the counter $M_S$ of consecutive successes. It updates the control only after failures and halts when $M_S=M_H$. In its original state-learning formulation, SSML was numerically shown to achieve near-$O(N^{-1})$ average infidelity~\cite{Lee2018}; a later linear-optical experiment demonstrated almost ideal scaling $O(N^{-0.983})$ together with the empirical usefulness of the monitored proxy $(1+M_S)^{-1}$~\cite{Lee2021}. These results are striking because the protocol is simultaneously one-bit, online, and hardware-light~\cite{Bang2026run}.

Despite this, the metrological meaning of SSML has remained somewhat implicit. The original SSML studies were focused on the state-learning perspective. Quantum sensing, however, asks a related but not identical question. In sensing, one is typically interested not in reconstructing an arbitrary state in full Hilbert space, but in estimating a parameter that labels a physically structured state family. This shift in viewpoint raises sharper questions. What exactly is certified when SSML halts? Why should a terminal run of consecutive successes count as statistical evidence rather than merely as a heuristic stopping flag? Does the one-bit record preserve the Fisher information already carried by the probe, or does aggressive compression into a yes/no stream wash out the very metrological resource one hoped to exploit? Finally, if entangled probes are inserted into the protocol, is their quantum-metrological gain retained or lost inside the learning loop?

The aim of this work is to answer these questions. We reinterpret SSML as an adaptive sensing protocol on a parameterized manifold of probe states. In the quantum sensing language, the protocol has a natural two-stage meaning: an acquisition stage, which finds the correct branch or basin of the sensing landscape, and a lock stage, which certifies local alignment~\cite{Bang2026run}. The central observation is that the halting rule is itself an intrinsic sequential certificate. Once a terminal run of length $M_H$ has been observed, the run probability under a low-fidelity null hypothesis is exponentially suppressed, so the terminal run immediately defines an operational certificate scale. Near the optimum, the usual local QFI/Bures geometry converts this infidelity scale into a parameter error bar.

A second theme of this work concerns the role of classical information~\cite{Braunstein1994,Paris2009}. A one-bit outcome per copy may sound too coarse to support precise metrology, but this intuition is misleading in the local regime relevant for sensing. For compensation-type sensing families~\cite{Wiseman1995,Berry2000,Higgins2007,Xiang2011}, the Bernoulli success/failure record generated by SSML is locally QFI-matched: its classical Fisher information approaches the probe QFI itself. In other words, SSML does not throw away the sensing advantage present in the probe. This point is conceptually important, because it separates the resource-bearing part of the protocol, namely the probe family, from the estimator layer, namely the adaptive rule and the stopping-time semantics.

To make the above-described viewpoints concrete, we study a photonic NOON-state phase-sensing example~\cite{Bollinger1996,Dowling2008}. We show numerically that in a local branch-resolved setting, the terminal infidelity follows the familiar $\nu^{-1}$ law as a function of entangled shots $\nu=\mathbb{E}[T]$, while the phase root-mean-square error (RMSE) displays the expected SQL slope in the total resource $R=m\,\mathbb{E}[T]$ together with the preserved $\sqrt{m}$ quantum gain at fixed $R$. We then explain why a single entangled scale is globally ambiguous and why a multiscale design is the natural route to Heisenberg-compatible behavior.

%---------------------------------------------------------------------------------------------------------------------------------
\section*{Results}
%---------------------------------------------------------------------------------------------------------------------------------

%-------------------------------------------------------------------------------------------------------------------------------------------------------------------------------------------
\subsection*{A brief introduction to the SSML algorithm}

We briefly recall SSML in its original state-learning form~\cite{Lee2018,Lee2021,Bang2026run}. In SSML, repeated copies of a pure state $\ket{\psi_\tau}$ are processed one by one. On each shot, a tunable unitary $\hat{U}(\bm{p})$ is applied and the output is tested by a fixed yes/no projection onto a fiducial state $\ket{\phi_0}$. The corresponding success probability is
\begin{eqnarray}
p_s(\bm{p}) = \abs{\bra{\phi_0}\hat{U}(\bm{p})\ket{\psi_\tau}}^2,
\label{eq:ssml-generic-success}
\end{eqnarray}
and the associated infidelity is
\begin{eqnarray}
\epsilon(\bm{p}) := 1-p_s(\bm{p}).
\label{eq:ssml-generic-infidelity}
\end{eqnarray}
The learning task is therefore to tune the control $\bm{p}$, so that the input state is mapped as close as possible to the fiducial target.

The protocol stores only a very small internal memory: the current control $\bm{p}_n$ and the counter $M_S^{(n)}$ of consecutive successes. After a success, the controller keeps the current setting; otherwise, after a failure, it resets the counter and applies a randomized update, i.e.,
\begin{eqnarray}
M_S^{(n)}=
\begin{cases}
M_S^{(n-1)}+1, & \text{success},\\
0, & \text{failure},
\end{cases}
\quad
\bm{p}_{n+1}=
\begin{cases}
\bm{p}_n, & \text{success},\\
\bm{p}_n+\omega_n \bm{r}_n, & \text{failure},
\end{cases}
\label{eq:ssml-rule}
\end{eqnarray}
with step size $\omega_n = a(M_S^{(n-1)}+1)^{-b}$ and random direction $\bm{r}_n$. The protocol halts at
\begin{eqnarray}
T := \inf\{n\ge 1 : M_S^{(n)} = M_H\}.
\label{eq:halting-time}
\end{eqnarray}
This compact rule already contains the two structural ingredients that matter later for sensing: an acquisition part, driven by failure-triggered exploration, and a certification part, driven by the terminal run of consecutive successes. The sensing reinterpretation below keeps exactly this algorithmic backbone while changing the physical meaning of the control variable and of the success event.

%-------------------------------------------------------------------------------------------------------------------------------------------------------------------------------------------
\subsection*{SSML on a sensing manifold: from state learning to parameter locking}

We now specialize the generic SSML template to a one-parameter sensing family
\begin{eqnarray}
\ket{\psi^{(m)}_\lambda} = \hat{V}_\lambda \ket{\psi^{(m)}_0},
\label{eq:sensing-family}
\end{eqnarray}
where $\lambda$ is the parameter of interest and $m$ is the number of particles used in a single shot. The label $m=1$ corresponds to a product probe; $m>1$ may represent an entangled block, for example, a GHZ/NOON probe~\cite{Bollinger1996,Dowling2008}. Rather than treating the output as an arbitrary unknown state in the full Hilbert space, we restrict attention to the structured manifold generated by the known family $\hat{V}_\lambda$. The generic control variable $\bm{p}$ of Eq.~(\ref{eq:ssml-rule}) is correspondingly reinterpreted as a trial compensation parameter $\tilde{\lambda}$ through a unitary $\hat{U}(\tilde{\lambda})$ that attempts to undo the unknown sensing transformation and return the probe toward a fixed fiducial state.

Specializing Eq.~(\ref{eq:ssml-generic-success}) to this family gives the single-shot return probability
\begin{eqnarray}
 p_s(\tilde{\lambda},\lambda) = \abs{\bra{\psi^{(m)}_0}\hat{U}(\tilde{\lambda})\hat{V}_\lambda\ket{\psi^{(m)}_0}}^2,
 \quad
 \epsilon(\tilde{\lambda},\lambda) := 1-p_s(\tilde{\lambda},\lambda).
 \label{eq:ps-def}
\end{eqnarray}
When $\tilde{\lambda}$ approaches $\lambda$ inside the correct local basin, the compensated output $\hat{U}(\tilde{\lambda})\hat{V}_\lambda\ket{\psi^{(m)}_0}$ returns toward the fiducial probe and the success probability tends to one. In this way, the abstract state-learning objective of SSML becomes a physically meaningful lock condition: maximizing the success probability is equivalent to maximizing the probability of having correctly compensated the sensed signal.

The update law itself is unchanged in form. In the present one-parameter setting, the generic random direction $\bm{r}_n$ of Eq.~(\ref{eq:ssml-rule}) reduces to a scalar random increment $r_n$, so a success leaves $\tilde{\lambda}$ untouched whereas a failure triggers a randomized search step of size $\omega_n$. What changes is the interpretation of the trajectory. Early failures are not merely algorithmic setbacks; they correspond to exploration across branches or basins of the sensing landscape. Once the search enters the correct branch, long success runs indicate that the compensation unitary is staying near the locally optimal inverse. The original search--certification split of SSML (originally analyzed in a state-learning viewpoint~\cite{Bang2026run}) therefore becomes, in sensing language, an acquisition--lock split. The schematic of such an adaptive SSML sensing is presented in Fig.~\ref{fig:schematic}.

This reinterpretation is more than a change of the framework language. At the algorithmic level, SSML remains a one-bit adaptive loop with a randomized failure update and a run-length stopping rule. At the physical level, however, the loop is now tied to a structured parameter manifold whose local geometry can be quantified by the QFI~\cite{Paris2009}. That bridge is what allows the halting counter to become, in the next subsection, a metrological certificate rather than a purely heuristic stopping flag.

\begin{figure}[t]
    \centering
    \includegraphics[width=0.90\textwidth]{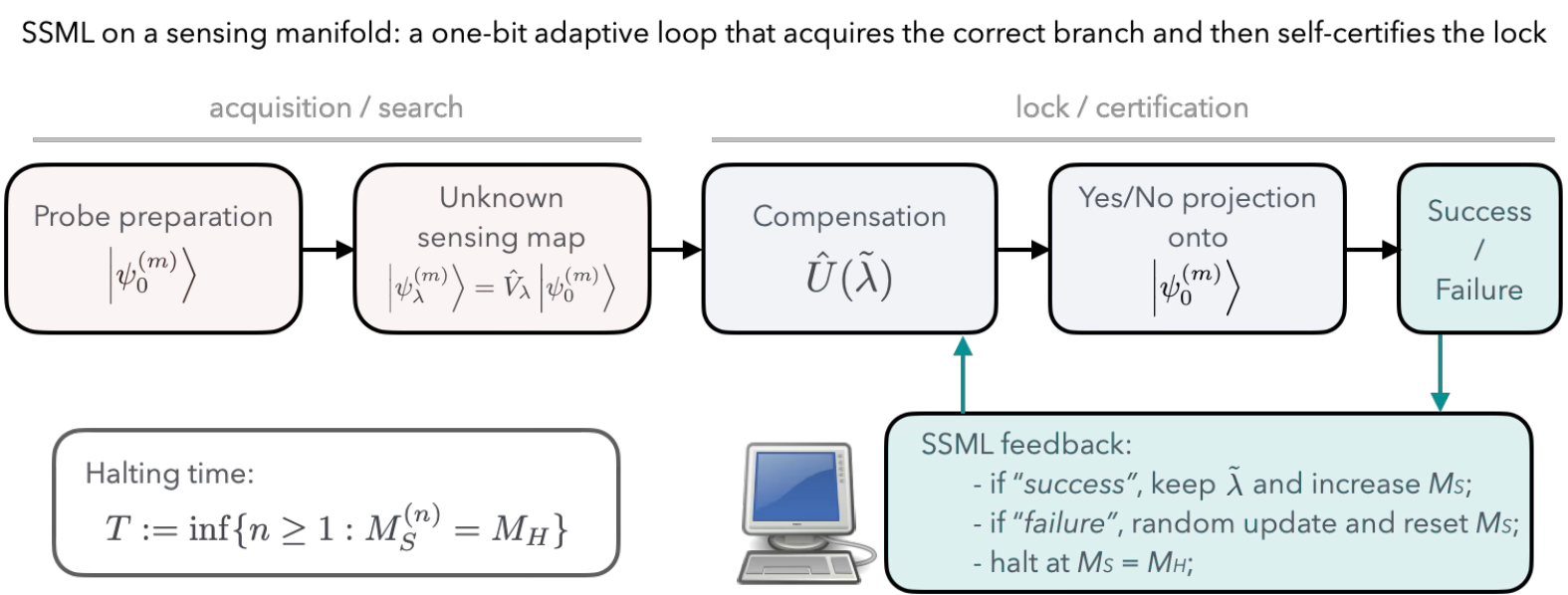}
    \caption{Schematic: SSML as an adaptive sensing loop. The unknown sensing map $\hat{V}_\lambda$ acts on a probe family, a compensation $\hat{U}(\tilde{\lambda})$ is learned shot by shot, and a yes/no projection onto the fiducial probe feeds a one-bit record back to the controller. The same protocol naturally separates into acquisition/search and lock/certification stages.}
    \label{fig:schematic}
\end{figure}

%-------------------------------------------------------------------------------------------------------------------------------------------------------------------------------------------
\subsection*{Terminal runs as sequential evidence: from halting rules to certificates}

The first structural question is deceptively simple: what does it mean, statistically, that SSML stops? The answer is encoded in a basic but important observation~\cite{Bang2026run}. During the terminal run of consecutive successes, the control is frozen. Therefore, the last $M_H$ outcomes are not produced under a changing adaptive policy; they are generated by repeated Bernoulli trials with one and the same success probability, namely the success probability of the final control. This makes the terminal run a valid object of the statistical inference.

Suppose that the final control corresponds to infidelity $\epsilon=1-p_s$. Then, the probability of observing $M_H$ consecutive successes is exactly
\begin{eqnarray}
\Prob(\text{terminal run of length }M_H\mid \epsilon)=(1-\epsilon)^{M_H}.
\label{eq:run-probability}
\end{eqnarray}
Now consider the null hypothesis that the final infidelity is not yet small, $\epsilon\ge \epsilon_0$. Since $(1-\epsilon)^{M_H}$ decreases monotonically with $\epsilon$, the most favorable case for the null is $\epsilon=\epsilon_0$. Hence,
\begin{eqnarray}
\Prob(\text{terminal run of length }M_H\mid \epsilon\ge \epsilon_0) \le (1-\epsilon_0)^{M_H}.
\label{eq:null-test}
\end{eqnarray}
If one wishes to reject this null at significance level $\eta$, it suffices to demand
\begin{eqnarray}
(1-\epsilon_0)^{M_H} \le \eta.
\label{eq:null-threshold}
\end{eqnarray}
Solving for $\epsilon_0$ produces the intrinsic certificate scale
\begin{eqnarray}
\epsilon_{\rm cert}(M_H,\eta) := 1-\eta^{1/M_H}.
\label{eq:eps-cert-exact}
\end{eqnarray}
For large $M_H$, a simple expansion gives
\begin{eqnarray}
\epsilon_{\rm cert}(M_H,\eta) = \frac{\ln(1/\eta)}{M_H} + O(M_H^{-2}).
\label{eq:eps-cert-asym}
\end{eqnarray}
Thus, $M_H$ is not merely a stopping threshold. It is an evidence budget stored in the run length itself.

This point is worth emphasizing because it explains the origin of the $1/N$-type law in a way that is independent of any heavy optimal-estimation formalism. A protocol that halts only after witnessing $M_H$ consecutive successes cannot certify an infidelity scale that decreases parametrically faster than $1/M_H$. Since every halted trajectory consumes at least $M_H$ copies, the terminal-run semantics already singles out the exponent $1$ at the level of certifiable infidelity. The later Fisher analysis does not replace this observation; it explains why the same exponent reappears when the same problem is viewed through local information geometry.

To translate the certificate into a metrological error bar, we next use the local geometry of a smooth pure-state family. For a one-parameter manifold, the overlap between neighboring states obeys the standard QFI/Bures expansion \cite{Braunstein1994,Paris2009}
\begin{eqnarray}
\abs{\braket{\psi^{(m)}_\lambda}{\psi^{(m)}_{\lambda+\delta\lambda}}}^2 = 1 - \frac{F_Q(\lambda)}{4}(\delta\lambda)^2 + O((\delta\lambda)^4).
\label{eq:bures-qfi}
\end{eqnarray}
In a compensation family, the same local law governs the return probability, so near the optimum and inside the correct branch,
\begin{eqnarray}
\epsilon(\tilde{\lambda},\lambda) = 1 - p_s(\tilde{\lambda},\lambda) = \frac{F_Q(\lambda)}{4}(\tilde{\lambda}-\lambda)^2 + O\!\left((\tilde{\lambda}-\lambda)^4\right).
\label{eq:qfi-local-geometry}
\end{eqnarray}
Eq.~(\ref{eq:eps-cert-exact}) and Eq.~(\ref{eq:qfi-local-geometry}) immediately imply the following.

\begin{proposition}[Fisher-calibrated parameter certificate]
\label{prop:param-certificate}
Near the optimum of a pure-state sensing manifold, the SSML halting rule $M_S=M_H$ implies the parameter certificate
\begin{eqnarray}
\abs{\tilde{\lambda}-\lambda}_{\rm cert} \le \frac{2}{\sqrt{F_Q(\lambda)}}\sqrt{1-\eta^{1/M_H}} \approx \frac{2}{\sqrt{F_Q(\lambda)}}\sqrt{\frac{\ln(1/\eta)}{M_H}}.
\label{eq:param-certificate}
\end{eqnarray}
\end{proposition}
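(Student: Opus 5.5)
The plan is to chain the terminal-run certificate of Eq.~(\ref{eq:eps-cert-exact}) with the local QFI geometry of Eq.~(\ref{eq:qfi-local-geometry}), and then invert the quadratic relation.

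\emph{Step 1: certify the infidelity.} Conditioning on halting, the last $M_H$ outcomes are i.i.d.\ Bernoulli trials at the frozen final control $\tilde{\lambda}$. The null hypothesis $\epsilon(\tilde{\lambda},\lambda)\ge\epsilon_{\rm cert}(M_H,\eta)$ is therefore rejected at level $\eta$ by Eqs.~(\ref{eq:null-test})--(\ref{eq:null-threshold}). At this confidence level we may thus take
\begin{eqnarray}
\epsilon(\tilde{\lambda},\lambda) \le 1-\eta^{1/M_H}.
\end{eqnarray}

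\emph{Step 2: convert to a parameter error.} ``Near the optimum'' means $\tilde{\lambda}$ lies in the correct branch, where Eq.~(\ref{eq:qfi-local-geometry}) holds. Dropping the $O((\tilde{\lambda}-\lambda)^4)$ remainder, this gives $\tfrac{F_Q(\lambda)}{4}(\tilde{\lambda}-\lambda)^2 \le 1-\eta^{1/M_H}$. Since $F_Q(\lambda)>0$ and both sides are nonnegative, taking square roots yields the first inequality in Eq.~(\ref{eq:param-certificate}). Substituting the large-$M_H$ expansion of Eq.~(\ref{eq:eps-cert-asym}) then gives the approximate form.

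\emph{Main obstacle: controlling the remainder.} Strictly, a bound on $\epsilon$ controls $|\tilde{\lambda}-\lambda|$ only if $\epsilon$ is monotone in $|\tilde{\lambda}-\lambda|$ on the basin, so that the sublevel set $\{\epsilon\le\epsilon_{\rm cert}\}$ is a single interval around $\lambda$. I would first state this as the operational meaning of ``near the optimum / inside the correct branch.''

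I would then handle the quartic correction by writing, on a small interval around $\lambda$,
\begin{eqnarray}
\epsilon \ge \frac{F_Q}{4}(1-c\,\delta^2)\,\delta^2, \qquad \delta := |\tilde{\lambda}-\lambda|,
\end{eqnarray}
for some constant $c$. Because $\epsilon_{\rm cert}=O(1/M_H)\to0$, we also have $\delta^2=O(\epsilon_{\rm cert}/F_Q)$. The certified bound therefore holds up to a multiplicative factor $1+O(1/M_H)$, which is the sense of ``$\le$'' and ``$\approx$'' in the proposition.
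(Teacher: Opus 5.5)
Your proposal is correct and follows the same route as the paper: it uses the terminal-run certificate $\epsilon\le 1-\eta^{1/M_H}$ from Eqs.~(\ref{eq:null-test})--(\ref{eq:eps-cert-exact}), inverts the local quadratic relation of Eq.~(\ref{eq:qfi-local-geometry}) by taking square roots, and obtains the approximate form from Eq.~(\ref{eq:eps-cert-asym}). Your treatment of basin monotonicity and of the quartic remainder, which gives a $1+O(1/M_H)$ multiplicative slack, spells out what the paper's sketch absorbs into ``up to quartic corrections.'' That slack is actually needed for the bound to hold, since for NOON probes $\epsilon=\sin^2[m(\theta-\tilde{\theta})/2]$ lies below its quadratic approximation.
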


\begin{proof}[Proof sketch]---Eq.~(\ref{eq:null-threshold}) tells us that a terminal run of length $M_H$ rules out any infidelity larger than $\epsilon_{\rm cert}(M_H,\eta)$ at significance level $\eta$. In the local regime, Eq.~(\ref{eq:qfi-local-geometry}) identifies infidelity with the squared QFI distance up to quartic corrections. Taking square roots yields Eq.~(\ref{eq:param-certificate}).
\end{proof}

\begin{figure}[t]
    \centering
    \includegraphics[width=1.00\textwidth]{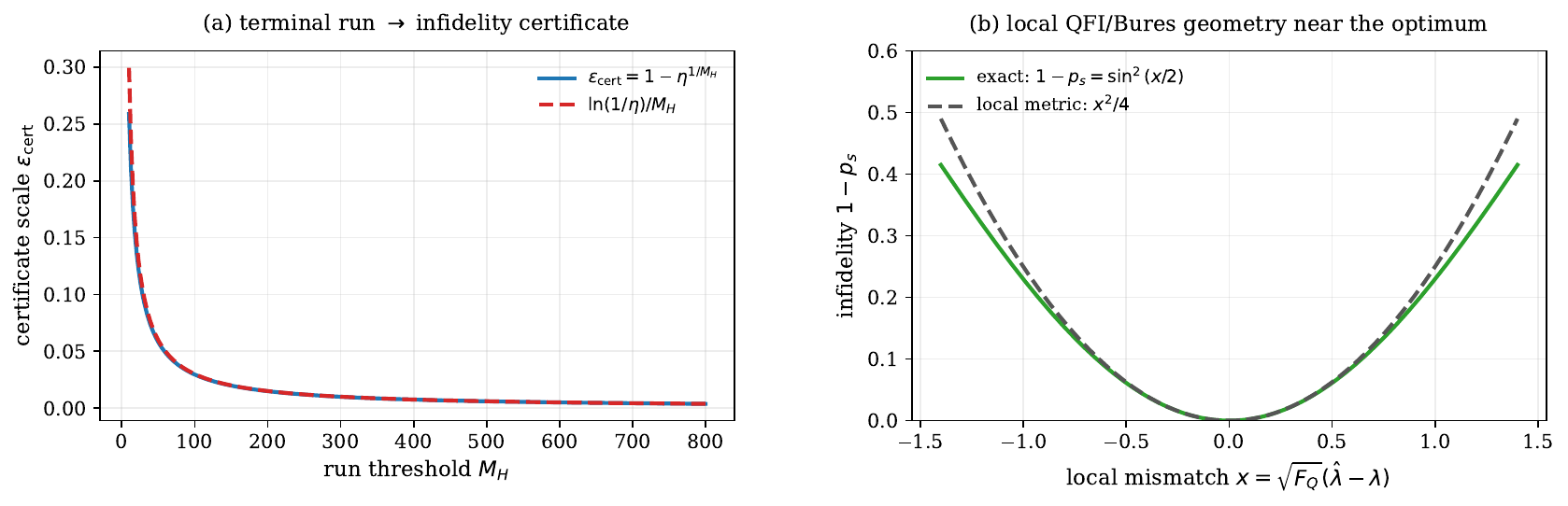}
    \caption{Certificate and local geometry. (a) The terminal run $M_H$ induces the intrinsic certificate scale $\epsilon_{\rm cert}=1-\eta^{1/M_H}$, with the expected $\ln(1/\eta)/M_H$ asymptote. (b) Near the optimum, the infidelity of a compensation family is locally quadratic, $1-p_s \approx x^2/4$ with $x=\sqrt{F_Q}(\tilde{\lambda}-\lambda)$, i.e., the standard QFI/Bures metric relation.}
    \label{fig:cert-fisher}
\end{figure}

Proposition~\ref{prop:param-certificate} is the first main bridge between the learning and sensing languages. It says that the halting counter $M_H$, a purely algorithmic object in the original SSML formulation, is also a metrological confidence dial. Each additional consecutive success tightens the certified parameter interval, and it does so online, without requiring a separate tomography step after the protocol stops. Fig.~\ref{fig:cert-fisher}(a) visualizes the exact certificate scale and its $1/M_H$ asymptote, while Fig.~\ref{fig:cert-fisher}(b) illustrates the local quadratic geometry behind the conversion from infidelity to parameter error.

%-------------------------------------------------------------------------------------------------------------------------------------------------------------------------------------------
\subsection*{Why one bit is enough locally: SSML as a Fisher-preserving layer}

The second main question is whether the extreme compression of SSML, i.e., one bit per copy plus the short memory stored in $M_S$, destroys metrological content. The local answer is no. In fact, for the compensation families of interest, the Bernoulli success/failure record carries the same local Fisher information as the probe itself.

We introduce the local mismatch coordinate $x:=\tilde{\lambda}-\lambda$ and expand the success probability as
\begin{eqnarray}
 p_s(x) = 1-\frac{F_Q}{4}x^2 + O(x^4).
 \label{eq:local-ps}
\end{eqnarray}
Because each shot produces a Bernoulli outcome, the classical Fisher information of the one-bit record is~\cite{Paris2009}
\begin{eqnarray}
I_{\rm cl}(x) = \frac{[\partial_x p_s(x)]^2}{p_s(x)[1-p_s(x)]}.
\label{eq:Icl-def}
\end{eqnarray}
Substituting Eq.~(\ref{eq:local-ps}) gives a remarkably simple result:
\begin{eqnarray}
I_{\rm cl}(x) = F_Q + O(x^2).
\label{eq:Icl-local}
\end{eqnarray}
Thus, the one-bit record is locally QFI-matched.

\begin{theorem}[Local Fisher matching of SSML]
\label{thm:fisher-matching}
For a smooth compensation-type sensing family, the Bernoulli success/failure record generated by SSML preserves the probe's local Fisher information:
\begin{eqnarray}
\lim_{x\to 0} I_{\rm cl}(x) = F_Q.
\end{eqnarray}
Hence, the one-bit record is locally QFI-matched.
\end{theorem}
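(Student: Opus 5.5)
The plan is to compute $I_{\rm cl}(x)$ directly from the local expansion of the success probability, Eq.~(\ref{eq:local-ps}), and take the limit $x\to 0$. The one subtlety is that this is a $0/0$ limit: at $x=0$ the numerator $[\partial_x p_s]^2$ vanishes, and so does the factor $1-p_s$ in the denominator. Both must therefore be tracked to leading order rather than evaluated pointwise.

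First, I would justify the expansion itself. For a smooth compensation-type family, $p_s(x)=\abs{\bra{\psi_0}\hat U(\lambda+x)\hat V_\lambda\ket{\psi_0}}^2$ is smooth in $x$ and attains its maximum value $1$ at $x=0$, since perfect compensation returns the fiducial probe. Hence $\partial_x p_s(0)=0$. By the Bures/QFI relation, Eq.~(\ref{eq:bures-qfi}) specialized as in Eq.~(\ref{eq:qfi-local-geometry}), the second derivative is $\partial_x^2 p_s(0)=-F_Q/2$. The compensation structure is what lets us identify the return-probability curvature with the probe QFI.

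To control the error term, I will assume that $p_s$ is even about the optimum, or at least $C^4$ with a vanishing cubic term. This holds, for instance, when $\hat U(\tilde\lambda)\hat V_\lambda$ depends only on $\tilde\lambda-\lambda$ through a generator, as in phase sensing. In that case $p_s(x)=1-\tfrac{F_Q}{4}x^2+O(x^4)$. If only $C^3$ smoothness is available, one gets $O(x^3)$ instead, and the argument below still goes through with $O(x)$ corrections.

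Next, I would substitute the expansion term by term:
\begin{eqnarray}
\partial_x p_s(x) &=& -\tfrac{F_Q}{2}x+O(x^3),\qquad [\partial_x p_s]^2=\tfrac{F_Q^2}{4}x^2\,[1+O(x^2)],\nonumber\\
1-p_s(x) &=& \tfrac{F_Q}{4}x^2\,[1+O(x^2)],\qquad p_s(x)=1+O(x^2).\nonumber
\end{eqnarray}
Dividing, the factor $x^2$ cancels exactly, giving
\begin{eqnarray}
I_{\rm cl}(x)=\frac{F_Q^2 x^2/4}{F_Q x^2/4}\,[1+O(x^2)]=F_Q+O(x^2),\nonumber
\end{eqnarray}
which is Eq.~(\ref{eq:Icl-local}). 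Letting $x\to0$ then yields the theorem.

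The main obstacle is the $0/0$ cancellation. One must check that the leading orders of numerator and denominator are both exactly quadratic with matching coefficients, which requires $F_Q>0$ (a nondegenerate optimum). I would state this explicitly as a hypothesis, since if $F_Q=0$ the limit can fail. I would also remark that the result is a statement about the limit: it is not evaluated at $x=0$, where $I_{\rm cl}$ is undefined, and this is consistent with the local-lock regime in which SSML operates.
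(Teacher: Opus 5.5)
Your proposal is correct and is the paper's own proof: expand $p_s$ to quadratic order about the lock point, note that $[\partial_x p_s]^2$ and $p_s(1-p_s)$ both vanish as $x^2$ with coefficients $F_Q^2/4$ and $F_Q/4$, and take the ratio. Your added care---getting $\partial_x p_s(0)=0$ from the maximum, noting that the $O(x^4)$ remainder in Eq.~(\ref{eq:local-ps}) presumes an even landscape (an $O(x^3)$ remainder still gives the limit), and stating $F_Q>0$---only makes explicit what the paper leaves implicit, and the last assumption is not needed for the limit, since for smooth $1-p_s\ge 0$ with $F_Q=0$ one still has $I_{\rm cl}\to 0=F_Q$.
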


\begin{proof}[Proof sketch]---From Eq.~(\ref{eq:local-ps}), we have $\partial_x p_s(x) = -\frac{F_Q}{2}x+O(x^3)$. The denominator of Eq.~(\ref{eq:Icl-def}) is $p_s(x)[1-p_s(x)]=\frac{F_Q}{4}x^2+O(x^4)$, while the numerator is $[\partial_x p_s(x)]^2=\frac{F_Q^2}{4}x^2+O(x^4)$. Taking the ratio yields Eq.~(\ref{eq:Icl-local}).
\end{proof}

The theorem pinpoints the role of SSML in quantum-enhanced sensing. In summary, SSML is Fisher-preserving (not Fisher-generating). The QFI is set upstream by the probe family. What SSML contributes is an adaptive and experimentally lightweight controller that turns a sequence of one-bit outcomes into an online lock without discarding the local statistical content of the probe. This separation of roles matters conceptually: it tells us that the quantum advantage of an entanglement-based sensor and the operational convenience of SSML are compatible rather than competing design principles.

The same perspective also clarifies why the counter $M_S$ already behaves like a monitored precision indicator in earlier SSML work \cite{Lee2018,Lee2021}. If the controller is approximately stationary over a short local stretch of shots, then the run length $L$ before the next failure is geometric,
\begin{eqnarray}
\Prob(L=\ell\mid \epsilon)=(1-\epsilon)^\ell\epsilon,
\quad
\mathbb{E}[L\mid\epsilon]=\epsilon^{-1}-1.
\label{eq:geom-run}
\end{eqnarray}
Replacing $L$ by the observed counter $M_S$, we obtain the familiar proxy
\begin{eqnarray}
\epsilon \approx (1+M_S)^{-1}.
\label{eq:monitored-infidelity}
\end{eqnarray}
On a sensing manifold, Eq.~(\ref{eq:qfi-local-geometry}) then turns this into
\begin{eqnarray}
\Delta\lambda_{\rm mon} \sim \frac{2}{\sqrt{F_Q}}\frac{1}{\sqrt{1+M_S}},
\label{eq:monitoring-proxy}
\end{eqnarray}
so the same counter that drives the halting logic also serves as a ``real-time metrological monitor.''

%-------------------------------------------------------------------------------------------------------------------------------------------------------------------------------------------
\subsection*{Entangled probes, preserved quantum gain, and what Heisenberg compatibility really means}

We now turn to the question that matters most for quantum-enhanced sensing: what happens when the probes carry a genuine quantum advantage? Here, let us consider the canonical GHZ/NOON phase family~\cite{Bollinger1996,Dowling2008}
\begin{eqnarray}
\ket{\Psi^{(m)}_\theta} = \frac{\ket{0}^{\otimes m}+e^{i m\theta}\ket{1}^{\otimes m}}{\sqrt{2}}.
\label{eq:noon-state}
\end{eqnarray}
This family is generated by a phase shift with effective generator variance $m^2/4$, so its QFI is $F_Q=m^2$. Equivalently, with the compensation phase $\tilde{\theta}$ and fiducial projection onto $\ket{\Psi_0^{(m)}}$, the success probability takes the explicit form
\begin{eqnarray}
 p_s(\theta,\tilde{\theta})  = \cos^2\left[\frac{m(\theta-\tilde{\theta})}{2}\right].
 \label{eq:noon-ps}
\end{eqnarray}
Near the optimum,
\begin{eqnarray}
1-p_s \approx \frac{m^2}{4}(\theta-\tilde{\theta})^2,
\quad\text{and}\quad
F_Q = m^2.
\label{eq:noon-local}
\end{eqnarray}
This is the entangled analogue of Eq.~(\ref{eq:qfi-local-geometry}): the local mismatch is magnified by a factor $m$.

If $\nu$ entangled shots are consumed and the total particle budget is $R=m\nu$, then the Fisher-preserving character of SSML implies
\begin{eqnarray}
\mathrm{MSE}(\tilde{\theta}) &=& O\left(\frac{1}{\nu m^2}\right),
\quad
\Delta\theta = O\left(\frac{1}{m\sqrt{\nu}}\right) = O\left(\frac{1}{\sqrt{Rm}}\right).
\label{eq:fixed-m-gain}
\end{eqnarray}
The physical meaning of Eq.~\eqref{eq:fixed-m-gain} is subtle but important. For $m=1$, we recover the product-probe SQL, i.e., $\Delta\theta=O(R^{-1/2})$. For fixed entanglement depth $m>1$, the total-resource slope remains SQL-like, but the prefactor is improved by $\sqrt{m}$. This is a genuine quantum gain relative to the product probes at the same total resource, even though it is not yet Heisenberg scaling in the total resource itself.

\begin{corollary}[Quantum gain at fixed entanglement depth]
\label{cor:fixed-depth}
At fixed total resource $R$, SSML used with $m$-particle entangled probes preserves a $\sqrt{m}$ improvement over the product-state SQL:
\begin{eqnarray}
\Delta\theta_{\rm SSML-ent} = O\left(\frac{1}{\sqrt{Rm}}\right),
\quad
\Delta\theta_{\rm SQL} = O\left(R^{-1/2}\right).
\end{eqnarray}
\end{corollary}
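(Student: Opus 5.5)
The plan is to combine Theorem~\ref{thm:fisher-matching} with the explicit NOON-state geometry of Eqs.~(\ref{eq:noon-ps})--(\ref{eq:noon-local}) and a Cram\'er--Rao-type accounting of the number of entangled shots. The argument has three steps.

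\textbf{Step 1: local Fisher information per entangled shot.} Write $x=\tilde{\theta}-\theta$. From Eq.~(\ref{eq:noon-ps}) we have $p_s=\cos^2(mx/2)$, so $\partial_x p_s=-\tfrac{m}{2}\sin(mx)$ and $p_s(1-p_s)=\tfrac14\sin^2(mx)$. Substituting into Eq.~(\ref{eq:Icl-def}) gives $I_{\rm cl}(x)=m^2$ exactly for every $x$ with $\sin(mx)\neq 0$. This is stronger than the limit in Theorem~\ref{thm:fisher-matching}: each entangled shot carries Fisher information $F_Q=m^2$ throughout the correct branch $|x|<\pi/m$, not only asymptotically.

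\textbf{Step 2: accumulate information over $\nu$ shots.} I will use the paper's branch-resolved setting, in which acquisition has already placed $\tilde{\theta}$ in the correct fringe. By the chain rule for Fisher information of adaptively chosen Bernoulli trials, the information in the one-bit record after $\nu=\mathbb{E}[T]$ shots is $\sum_n \mathbb{E}\,I_{\rm cl}(x_n)=\nu m^2$. The Cram\'er--Rao bound then gives $\mathrm{MSE}\ge 1/(\nu m^2)$.

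To show this scaling is actually attained, rather than only bounded from below, I will argue from the certificate. Proposition~\ref{prop:param-certificate} with $F_Q=m^2$ gives $|\tilde{\theta}-\theta|\lesssim (2/m)\sqrt{\ln(1/\eta)/M_H}$. By Eq.~(\ref{eq:eps-cert-asym}) and the observation that every halted run uses at least $M_H$ copies, with $\nu=\Theta(M_H)$ once the acquisition overhead is bounded, this yields $\mathrm{MSE}=O(1/(\nu m^2))$ and hence $\Delta\theta=O(1/(m\sqrt{\nu}))$.

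\textbf{Step 3: convert to total resource.} With $\nu=R/m$, Step 2 gives $\Delta\theta=O(1/(m\sqrt{R/m}))=O(1/\sqrt{Rm})$. Setting $m=1$ gives the product-probe value $O(R^{-1/2})$. The ratio between the two is $\sqrt{m}$ at fixed $R$, which is the gain claimed in Corollary~\ref{cor:fixed-depth}.

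\textbf{Main obstacle.} The difficult step is the achievability half of Step 2, namely that SSML's expected halting time satisfies $\mathbb{E}[T]=O(M_H)$ with constants independent of $m$. This requires two things:
\begin{itemize}
\item the acquisition phase stays inside the correct fringe, whose width $2\pi/m$ shrinks as $m$ grows;
\item the random step sizes $\omega_n$ are rescaled by $1/m$.
\end{itemize}
I would handle this by working in the rescaled coordinate $y=mx$. In that coordinate the dynamics becomes $m$-independent: the success probability $\cos^2(y/2)$ no longer depends on $m$, so all statistics of $T$ do not either. The constants in $O(\cdot)$ are therefore uniform in $m$. Global phase ambiguity beyond a single fringe lies outside the statement and would be handled separately.
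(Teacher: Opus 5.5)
\textbf{Genuine gap: the achievability half of Step 2.}

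Your Steps 1 and 3, and the information count in Step 2, are sound and match the paper's proof. The paper's proof consists only of three moves: total Fisher information $\nu m^2$, variance $O((\nu m^2)^{-1})$ \emph{for an efficient estimator}, then $\nu=R/m$. Your exact identity $I_{\rm cl}\equiv m^2$ across the branch, and your chain-rule treatment of adaptive shots, are sharper than the paper's appeal to ``independent'' shots. However, Cram\'er--Rao only gives $\mathrm{MSE}\ge 1/(\nu m^2)$, while the corollary is an upper bound. The proof therefore rests entirely on your achievability step, and that step does not follow.

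Proposition~\ref{prop:param-certificate} is a significance-level statement about a \emph{fixed} control: it bounds $\Prob(\text{run of length }M_H\mid\epsilon)$ when $\epsilon\ge\epsilon_0$. It does not bound the error of the control at which the adaptive process actually halts, and still less $\mathbb{E}[(\tilde{\theta}_T-\theta)^2]$. Two ingredients are missing.
\begin{itemize}
\item \emph{Multiplicity.} Every inter-failure epoch runs at a new control and is a fresh chance to halt on a bad one. The honest bound is therefore $\Prob(\epsilon_T>t)\le\mathbb{E}[N_t]\,(1-t)^{M_H}$, where $N_t$ counts epochs whose infidelity exceeds $t$.
\item \emph{High probability is not a mean.} A fixed-$\eta$ certificate leaves an $\eta$-probability event with $O(1)$ error, so it gives no $O(1/M_H)$ mean. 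Taking $\eta\sim1/M_H$ costs a factor $\ln M_H$. Integrating the tail with the crude bound $N_t\le T+1$ gives only $\mathbb{E}[\epsilon_T]\le(1+\ln(\nu+1))/M_H$.
\end{itemize}
Removing the logarithm requires $\mathbb{E}[N_t]=O(1)$ near the final scale. That is a quantitative statement about the search dynamics with step $a(M_S+1)^{-b}$, not about the certificate. The same dynamics is what would give $\nu=O(M_H)$, which you need but only assert; the trivial bound $\nu\ge M_H$ points the wrong way. Separately, converting $\epsilon$ into $(\tilde{\theta}-\theta)^2$ inside an expectation needs a branch-wide inequality such as $\sin^2(y/2)\ge y^2/\pi^2$ for $|y|\le\pi$, not just the local expansion.

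\textbf{What the rescaling does and does not give.}

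The rescaling $y=m(\tilde{\theta}-\theta)$ does not close this gap. It shows that the law of $(T,y_T)$ is independent of $m$, so whatever constants exist are uniform in $m$; it does not show that they exist. It does, however, give you something worth making the backbone of the proof. Since $\nu$ and $\mathbb{E}[y_T^2]$ depend only on $M_H$, while $\Delta\theta=\sqrt{\mathbb{E}[y_T^2]}/m$ and $R=m\nu$, you get exactly
$\sqrt{R}\,\Delta\theta=m^{-1/2}\sqrt{\nu\,\mathbb{E}[y_T^2]}$.
The $\sqrt{m}$ gain is therefore exact, and the whole corollary reduces to one $m$-free claim: $\sup_{M_H}\nu\,\mathbb{E}[y_T^2]<\infty$, i.e.\ SSML attains the SQL with a product probe. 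That is precisely the efficiency the paper's proof takes for granted, and which the paper supports only numerically (the $\nu^{-0.95}$ fit). You can either state it as an explicit hypothesis, in which case your argument is the paper's with a cleaner reduction, or prove it from the search dynamics. The terminal-run certificate alone cannot supply it.
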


\begin{proof}[Proof sketch]---The local Cram\'er--Rao scaling of an efficient estimator is set by the inverse total Fisher information. For $\nu$ independent entangled shots, the total Fisher information is $\nu F_Q=\nu m^2$, giving variance $O((\nu m^2)^{-1})$. Expressing $\nu$ as $R/m$, we obtain Eq.~(\ref{eq:fixed-m-gain}).
\end{proof}

The corollary explains the precise sense in which SSML can participate in quantum-enhanced sensing. SSML does not erase the $\sqrt{m}$ advantage. The estimator layer is compatible with the probe's quantum enhancement. At the same time, Eq.~(\ref{eq:noon-ps}) makes clear why entanglement-based sensing is not just a trivial plug-in replacement of the product case. The fringe period shrinks as $1/m$, so higher $m$ provides more local sensitivity but also more global ambiguity. This is exactly the setting in which the acquisition--lock interpretation of SSML becomes useful. A low-resolution stage can identify the correct fringe or branch, after which a high-resolution entangled stage can lock and certify within that branch~\cite{Higgins2007,Xiang2011}.

This naturally motivates a multiscale architecture. Let stage $j$ use entanglement depth $m_j=2^j$, and suppose that the output of stage $j$ is already branch-resolved before stage $j+1$ begins. If each stage drives the residual phase down to order $1/m_j$, then the number of entangled shots required at each stage is only $O(1)$ in the ideal local regime, whereas the particle cost of each shot is $m_j$. The stage cost is therefore $R_j=O(m_j)$ and the total cost is geometric,
\begin{eqnarray}
R_{\rm tot} = \sum_{j=0}^{J} O(m_j) = O(m_J),
\label{eq:geom-resource}
\end{eqnarray}
while the final residual uncertainty is $\Delta\theta_J=O(1/m_J)$. Hence,
\begin{eqnarray}
\Delta\theta_J = O(R_{\rm tot}^{-1}).
\label{eq:heisenberg-compatible}
\end{eqnarray}

\begin{proposition}[Ideal multiscale Heisenberg compatibility]
\label{prop:multiscale-hl}
An ideal branch-resolved SSML architecture with stage sizes $m_j=2^j$ is compatible with the Heisenberg resource law $\Delta\theta_J = O(R_{\rm tot}^{-1})$ in the noiseless local regime.
\end{proposition}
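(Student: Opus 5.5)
The plan is to make the heuristic in Eqs.~(\ref{eq:geom-resource})--(\ref{eq:heisenberg-compatible}) precise by induction over stages, under the stated idealizations (noiseless, branch-resolved hand-off). I will work with the NOON family of Eq.~(\ref{eq:noon-state}). At stage $j$ the success probability is $\cos^2[m_j(\theta-\tilde\theta)/2]$ with $m_j=2^j$, so it is unambiguous only modulo $2\pi/m_j$. The inductive hypothesis $H_j$ is: after stage $j$ the estimate satisfies $|\tilde\theta_j-\theta|\le c/m_j$ for a fixed constant $c<\pi/2$, with probability at least $1-\delta_j$.

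\textbf{Step 1 (branch preservation).} Since $m_{j+1}=2m_j$, the residual at the start of stage $j+1$ is at most $c/m_j = 2c/m_{j+1}$. I will choose $c$ small enough that $2c<\pi$. Then the stage-$(j+1)$ search is initialized inside the central fringe of width $2\pi/m_{j+1}$. Restricting the randomized SSML updates to the window $|\tilde\theta-\tilde\theta_j|\le c/m_j$ keeps the controller in the correct basin. This restriction is what ``branch-resolved'' means, and I take it as an assumption of the ideal architecture.

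\textbf{Step 2 (constant shot count per stage).} Inside the window I rescale to $x=m_{j+1}(\tilde\theta-\theta)$. In this variable the landscape $\cos^2(x/2)$ and the window are independent of $j$. SSML with a fixed halting length $M_H$ and step schedule (also rescaled by $1/m_{j+1}$) is therefore the same stochastic process at every stage. Its expected halting time $\mathbb{E}[T]=\nu_0$ and its failure probability $\delta$ of ending with $|x|>c$ are $j$-independent constants. To bound $\delta$ I would use the certificate Eq.~(\ref{eq:eps-cert-exact}) with Eq.~(\ref{eq:noon-local}): halting certifies $1-p_s\le\epsilon_{\rm cert}(M_H,\eta)$, hence $|x|\le 2\arcsin\sqrt{\epsilon_{\rm cert}}$. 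Choosing $M_H$ so that this is $\le c$ gives error probability at most $\eta$. This yields $H_{j+1}$ with $\delta_{j+1}\le\delta_j+\eta$, and the stage resource is $R_{j+1}=m_{j+1}\nu_0=O(m_{j+1})$.

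\textbf{Step 3 (summation).} Summing the stage costs gives $R_{\rm tot}=\nu_0\sum_{j\le J}2^j<2\nu_0 m_J$. Then $\Delta\theta_J\le c/m_J\le 2c\nu_0/R_{\rm tot}$, which is Eq.~(\ref{eq:heisenberg-compatible}).

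\textbf{Main obstacle.} The hard point is the accumulated failure probability $\sum_j\delta_j\approx J\eta$: a single branch error ruins the final estimate by $O(1/m_j)\gg 1/m_J$. I would first try a $j$-dependent significance $\eta_j=\eta\,2^{-(J-j)}$, or $\eta_j\propto 1/(J-j+1)^2$. Either choice makes the total error summable. It only raises $M_H$ at early stages by $O(J-j)$, so the extra cost is $\sum_j 2^j (J-j)=O(2^J)$ and the geometric resource count survives. Bounding $\mathbb{E}[T]$ uniformly within the window is the remaining technical point. I would handle it by treating it as part of the ``ideal local regime'' assumption, since the process in Step 2 is stage-independent, rather than deriving it from the randomized-walk dynamics.
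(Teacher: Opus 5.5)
Your Steps 1--3 are the paper's own argument, made more explicit. In Appendix~D the paper takes $c<\pi/2$ so the next stage starts in the central fringe, uses locality to make the run-length threshold and shot count per stage $O(1)$, and sums geometrically to get $R_{\rm tot}=O(m_J)$ and $\Delta\theta_J=O(1/m_J)$. The difference is that the paper never bounds branch-error probabilities. ``Ideal branch-resolved'' is a hypothesis there, enforced in Methods by clipping $x^{\rm in}_{j+1}=2x^{\rm out}_j$ to $[-\pi/2,\pi/2]$, so every stage is guaranteed to start in the correct basin. Under that hypothesis your Steps 1--3 already suffice.

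The probabilistic layer you add on top has genuine gaps. First, $(1-\epsilon_0)^{M_H}\le\eta$ is a significance level for \emph{one fixed} control. SSML tests a fresh control after every failure, so $\Prob(\abs{x_T}>c)$ is bounded only by $\eta$ times the expected number of controls tried ($\le\eta\,\mathbb{E}[T]$), not by $\eta$. Second, and more seriously, the proposition's $\Delta\theta_J$ is an RMSE: Appendix~D says ``the final RMSE is $\Delta\theta_J=O(1/m_J)$'', and Fig.~\ref{fig:multiscale} plots RMSE. By your own estimate, a branch error at stage $j$ leaves a final residual of order $1/m_j$, so it contributes about $\delta_j/m_j^2$ to the MSE. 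Heisenberg scaling therefore requires $\sum_j\delta_j\,4^{J-j}=O(1)$; a summable $\sum_j\delta_j$ only controls a fixed-confidence error bar.

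With your choice $\eta_j=\eta\,2^{-(J-j)}$, the $j=0$ term alone gives $\mathrm{MSE}\sim\eta/m_J$, i.e.\ $\Delta\theta_J\sim R_{\rm tot}^{-1/2}$, which is SQL. The choice $\eta_j\propto(J-j+1)^{-2}$ is worse still, giving only $1/\log R_{\rm tot}$. The device survives with, e.g., $\eta_j=\eta\,8^{-(J-j)}$: $M_{H,j}$ still grows only by $O(J-j)$, so the extra cost stays $O(2^J)$. This needs one assumption stated explicitly, which you also need for the cost count and for the $\mathbb{E}[T]$ factor above: $\mathbb{E}[T]$ grows at most linearly in $M_H$.

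Two further repairs are needed. The window $\abs{\tilde\theta-\tilde\theta_j}\le c/m_j$ lets $\abs{x}=m_{j+1}\abs{\tilde\theta-\theta}$ reach $4c$, so it stays inside the central fringe only if $c<\pi/4$, not $c<\pi/2$. Also, in a failure branch the window can exclude $\theta$, with every admissible control having infidelity at least some $\epsilon_{\min}>\epsilon_0$. Each control then passes with probability at most $(1-\epsilon_{\min})^{M_H}=\eta^{\kappa}$ with $\kappa>1$, so the halting time in that branch is at least $\eta^{-\kappa}$. The probability of entering the branch need not compensate this, so the unconditional $\mathbb{E}[T]$ is not covered by your ``ideal local regime'' assumption. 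A per-stage time-out, counting a timed-out stage as a failure, fixes this. Alternatively, adopt the paper's idealization outright and drop the $\delta_j$ bookkeeping.
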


Proposition~\ref{prop:multiscale-hl} is intentionally modest in its claim. It is not a general theorem about arbitrary noisy Heisenberg scaling, nor does it say that SSML can restore the Heisenberg performance when the probe family itself lacks the corresponding QFI. Rather, it says that once a multiscale entangled sensor has been arranged so that each stage operates in the correct local basin, the SSML estimator layer is fully compatible with the asymptotic Heisenberg resource law. In this sense SSML is a natural adaptive controller for coarse-to-fine entangled sensing.

%-------------------------------------------------------------------------------------------------------------------------------------------------------------------------------------------
\subsection*{Photonic NOON-state phase sensing: numerical analysis}

\subsubsection*{Photonic NOON-state SSML: local fixed-depth simulation} %------------------------------------------------------------------------------------------------

For the numerical analysis, we work in the local and branch-resolved regime. Here, we isolate the role of SSML as an estimator, not to conflate that role with the independent problem of global fringe disambiguation. In this regime, it is convenient to use the metric coordinate
\begin{eqnarray}
x := m(\theta-\tilde{\theta}),
\label{eq:metric-coordinate}
\end{eqnarray}
for which the physical update $\tilde{\theta}\mapsto \tilde{\theta} +(a/m)(M_S+1)^{-b}r$ becomes the $m$-independent metric update $x\mapsto x+a(M_S+1)^{-b}r$. We use the successful parameter choices $a=0.3$ and $b=0.5$ inherited from the original SSML studies~\cite{Lee2018,Lee2021,Bang2026run}. For each pair $(m,M_H)$ with $m\in\{1,2,4,8\}$ and $M_H\in\{20,40,80,160,320,640\}$, we perform $10^4$ Monte Carlo trajectories.

The numerical results are summarized in Fig.~\ref{fig:noon-main}. The first panel (a) shows the mean terminal infidelity $\mathbb{E}[\epsilon_T]$ against the mean number of entangled shots $\nu=\mathbb{E}[T]$. The log-log slope is close to $-1$, with fit $\mathbb{E}[\epsilon_T]\propto \nu^{-0.95}$. This is the entangled-shot analogue of the familiar SSML infidelity law: terminal infidelity is governed by the stopping-time certificate structure, so the essential $1/\nu$ behavior persists regardless of the entanglement depth. The second panel (b) changes the horizontal axis from entangled shots to total photons $R=m\mathbb{E}[T]$ and plots the phase RMSE. Here, the fitted slope is $\Delta\theta\propto R^{-0.48}$, i.e., SQL-like in the total resource. The entangled curves are shifted downward relative to the product curve, which is exactly what Eq.~(\ref{eq:fixed-m-gain}) predicts for fixed entanglement depth. Finally, the third panel (c) plots the fixed-resource quantity $\sqrt{R}\Delta\theta$ against $m$ and finds the clean law $\sqrt{R}\Delta\theta\propto m^{-1/2}$. This is the numerical signature of the preserved $\sqrt{m}$ quantum advantage.

Two further observations are worth highlighting. First, the three panels are mutually consistent in a nontrivial way. The panel (a) probes the stopping-time law in terms of entangled shots, whereas panels (b) and (c) re-express the same data in the physically more meaningful total resource. The fact that one sees both the near-$1/\nu$ infidelity law and the $m^{-1/2}$ fixed-resource gain in the same dataset strongly supports the interpretation that the estimator layer preserves the metrological improvement. Second, the fitted slopes do not need to be exactly $-1$ and $-1/2$ to validate the theory. Finite run thresholds, local clipping of the update, and halting-time bias naturally perturb the exponents at the few-percent level. What matters is the clear emergence of the expected asymptotic families.

Fig.~\ref{fig:noon-main} establishes the local fixed-depth message of the paper: once the correct branch has been acquired, SSML preserves both the near-$1/\nu$ infidelity law and the fixed-resource entanglement gain expected from the probe family. This local result, however, should not be overread as a claim of globally unambiguous phase recovery from a single entangled scale. Because the NOON fringe pattern is periodic [as in Eq.~(\ref{eq:noon-ps})], low terminal infidelity can coexist with large global phase error if the estimator locks onto the wrong fringe. That complementary point, together with the multiscale resolution of the ambiguity, is shown next subsection.

\begin{figure}[t]
    \centering
    \includegraphics[width=1.00\textwidth]{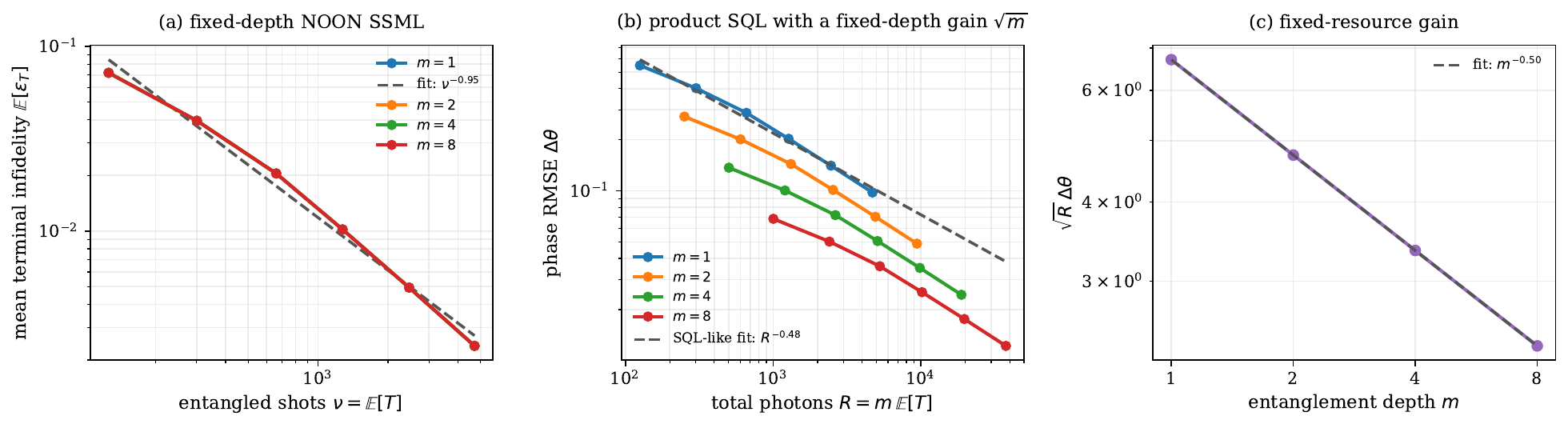}
    \caption{Numerical illustration for photonic NOON-state SSML. (a) Mean terminal infidelity versus entangled shots $\nu=\mathbb{E}[T]$ for $m=1,2,4,8$ in the local branch-resolved regime. The fit gives $\mathbb{E}[\epsilon_T]\propto \nu^{-0.95}$. (b) Phase RMSE versus total photon number $R=m\,\mathbb{E}[T]$. The slope is SQL-like, $\Delta\theta\propto R^{-0.48}$, while the entangled curves are shifted downward. (c) At fixed total resource, the prefactor obeys $\sqrt{R}\,\Delta\theta\propto m^{-1/2}$, confirming the preserved $\sqrt{m}$ advantage.}
    \label{fig:noon-main}
\end{figure}

\subsubsection*{Single-scale aliasing, multiscale hand-off, and why global resolution matters} %------------------------------------------------------------------------------------------------

The local fixed-depth simulation above deliberately starts inside the correct branch. That is the right setting for isolating the estimator's local Fisher-preserving behavior, but it suppresses the global ambiguity of a NOON fringe pattern. To expose that ambiguity, we repeat the simulation with a global prior at fixed $m=8$. The result is shown in Fig.~\ref{fig:aliasing}. The mean terminal infidelity keeps decreasing with total resource because the estimator finds a locally good fringe, but the global phase RMSE saturates because the estimator is often trapped on the wrong fringe. This makes explicit why low infidelity and low global parameter error are not equivalent for a single entangled scale.

\begin{figure}[t]
    \centering
    \includegraphics[width=0.90\textwidth]{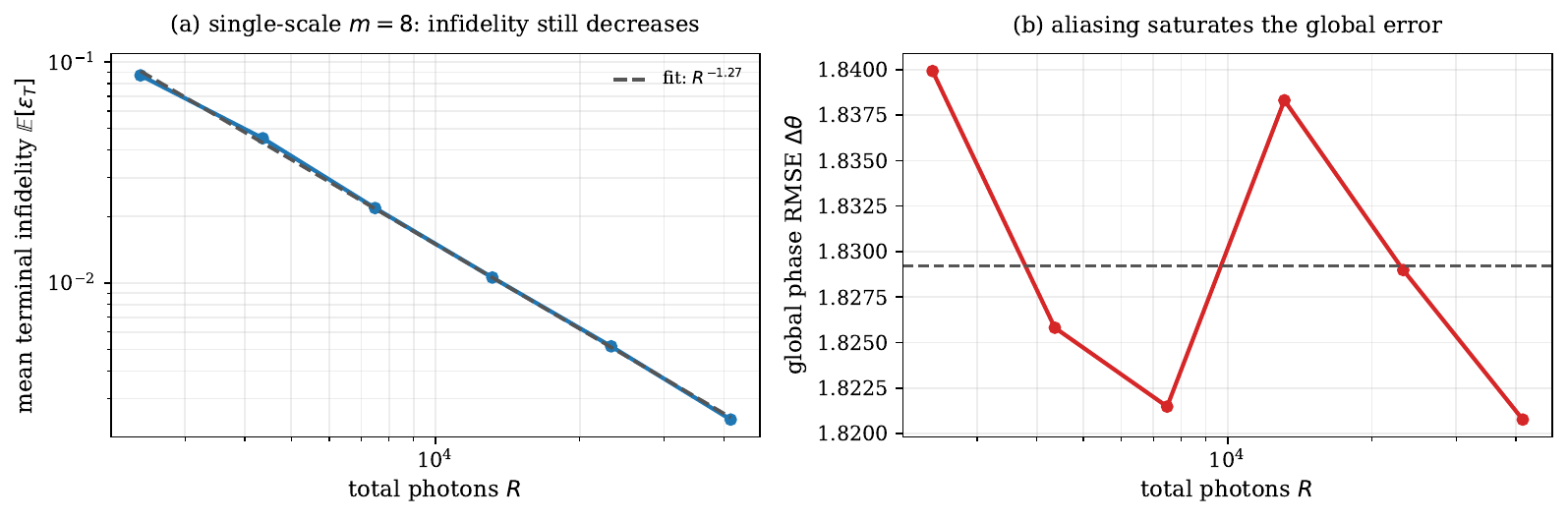}
    \caption{Single-scale aliasing under a global prior. (a) For a global prior and fixed $m=8$, the terminal infidelity continues to decrease with the total photon number. (b) The global phase RMSE, however, saturates because the protocol can lock onto an incorrect fringe. This is the metrological reason that the local fixed-depth numerical illustration of Fig.~\ref{fig:noon-main} is deliberately restricted to a branch-resolved regime.}
    \label{fig:aliasing}
\end{figure}

\begin{figure}[t]
    \centering
    \includegraphics[width=0.50\textwidth]{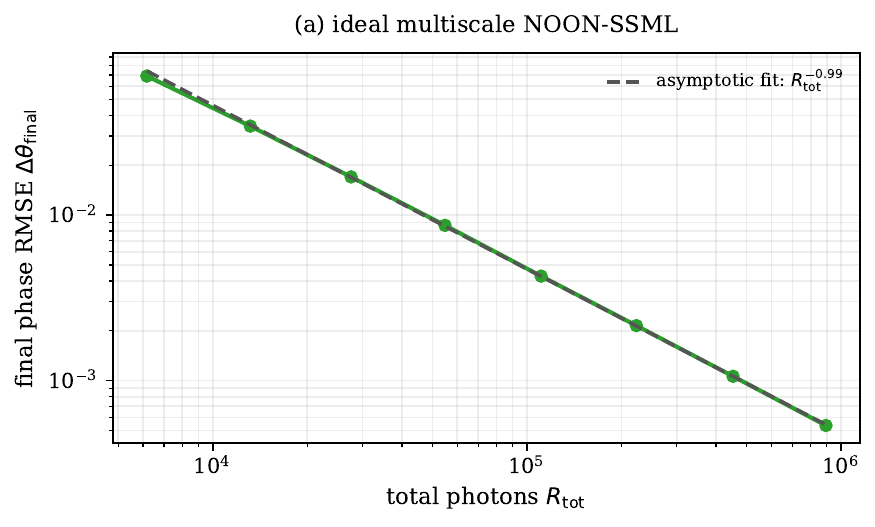}
    \caption{Ideal multiscale NOON-SSML. A coarse-to-fine architecture with stage depths $m_j=2^j$ and ideal branch-resolved hand-off exhibits the Heisenberg-compatible trend $\Delta\theta_{\rm final}\propto R_{\rm tot}^{-0.99}$ in the noiseless regime. The figure should be read as a resource-counting illustration rather than as a complete branch-selection protocol.}
    \label{fig:multiscale}
\end{figure}

Fig.~\ref{fig:multiscale} shows the complementary idealized multiscale simulation. Here, the stage $j$ uses depth $m_j=2^j$, and the stage output is handed to the next stage after ideal branch resolution. The resulting RMSE follows the asymptotic fit $\Delta\theta_{\rm final} \propto R_{\rm tot}^{-0.99}$, i.e., very close to the Heisenberg-compatible resource law predicted by Proposition~\ref{prop:multiscale-hl}. This result tells us that once global ambiguity is removed by a coarse-to-fine hand-off, nothing in the SSML estimator prevents the multiscale sensor from realizing the expected Heisenberg trend in the noiseless local regime.

\subsubsection*{Summary of the numerical analysis} %------------------------------------------------------------------------------------------------

Taken together, the three numerical datasets support the theoretical narrative of our work. The local fixed-depth dataset of Fig.~\ref{fig:noon-main} shows that the estimator preserves the expected $\nu^{-1}$ infidelity law and the $m^{-1/2}$ fixed-resource gain once branch ambiguity has been removed. Fig.~\ref{fig:aliasing} shows why one should not expect globally meaningful phase estimation from a single high-$m$ fringe pattern alone. Fig.~\ref{fig:multiscale} then shows how the same estimator layer becomes compatible with Heisenberg-like resource scaling when branch resolution is supplied stage by stage.

%What these simulations do not claim is equally important. They do not constitute a complete noise analysis, a universal optimal-control proof, or a general multiscale design for all sensing platforms. Their role is more focused: to demonstrate numerically that the conceptual distinctions developed in the analytical sections---certificate versus parameter error, Fisher preservation versus Fisher generation, fixed-depth gain versus multiscale Heisenberg compatibility---are visible in an explicit photonic NOON model.

%---------------------------------------------------------------------------------------------------------------------------------
\section*{Discussion and outlook}
%---------------------------------------------------------------------------------------------------------------------------------

The main message of this work can be stated succinctly. SSML can be cast as an adaptive estimator that makes an existing metrological gain operationally usable. In the sensing scenario, the terminal run $M_H$ is not merely a heuristic sign that the controller has settled, but is also an intrinsic sequential certificate whose scale is set by the length of the final success string. Near the optimum, this certificate becomes a QFI-calibrated parameter error bar. At the same time, the one-bit record that drives the protocol is locally Fisher-preserving: it carries the same local Fisher information as the underlying probe family.

This interpretation also sharpens the role of SSML inside quantum-enhanced sensing. The protocol is especially well matched to settings in which the experimental bottleneck lies not in heavy classical post-processing, but in online stabilization, repeated realignment, or rapid validation of a lock. In such settings a self-terminating estimator with a physically transparent halting meaning is a valuable architectural component. The present photonic NOON example illustrates this point concretely.

The picture of our work is conceptually clean. SSML can preserve QFI determined by a probe family while adding three things that are metrologically useful: an adaptive search policy, an interpretable monitored quantity, and a self-contained stopping rule. In this sense, SSML is not merely a learning algorithm transplanted into sensing. It is a minimal-feedback estimator layer that turns fidelity, Fisher information, and stopping-time evidence into a single operational object.

The setting in this work, i.e., pure-state probes, noiseless readout, and a clear distinction between branch acquisition and branch certification, was chosen to isolate the structural role of SSML as cleanly as possible. Relaxing these assumptions---for example, by including finite visibility, detector imperfections, or explicit branch-selection logic---is a natural direction for future work and would connect SSML more tightly to realistic quantum-sensor design.

%---------------------------------------------------------------------------------------------------------------------------------------------------------------------------------------------------------
\section*{Acknowledgement}
%---------------------------------------------------------------------------------------------------------------------------------------------------------------------------------------------------------

This work was supported by the Ministry of Science, ICT and Future Planning (MSIP) through the National Research Foundation of Korea (RS-2024-00432214, RS-2025-03532992, and RS-2025-18362970) and the Institute of Information and Communications Technology Planning and Evaluation grant funded by the Korean government (RS-2019-II190003, ``Research and Development of Core Technologies for Programming, Running, Implementing and Validating of Fault-Tolerant Quantum Computing System''), the Korean ARPA-H Project through the Korea Health Industry Development Institute (KHIDI), funded by the Ministry of Health \& Welfare, Republic of Korea (RS-2025-25456722), and the Ministry of Trade, Industry and Resources (MOTIR), Korea, under the project ``Industrial Technology Infrastructure Program'' (RS-2024-00466693). We acknowledge the Yonsei University Quantum Computing Project Group for providing support and access to the Quantum System One (Eagle Processor), which is operated at Yonsei University.

%---------------------------------------------------------------------------------------------------------------------------------
\section*{Methods}
%---------------------------------------------------------------------------------------------------------------------------------

\paragraph*{\bf Local fixed-depth dataset.}
For the main-text phase-sensing figure, we assume that a coarse acquisition stage has already identified the correct interference branch, so the initial metric coordinate $x_0=m(\theta-\tilde{\theta}_0)$ is sampled uniformly from $[-\pi/2,\pi/2]$. On a failure, the physical phase update is
\begin{eqnarray}
\tilde{\theta} \mapsto \tilde{\theta} + \frac{a}{m}(M_S+1)^{-b}r,
\quad
r \sim \mathrm{Uniform}[-1,1],
\label{eq:method-local-update}
\end{eqnarray}
which is equivalent to an $m$-independent metric update $x\mapsto x+a(M_S+1)^{-b}r$. This metric-coordinate formulation is useful because it isolates the estimator's role: the local landscape in $x$ is universal, while the conversion back to the physical phase keeps track of the entanglement depth through $\theta-\tilde{\theta}=x/m$. We set $a=0.3$ and $b=0.5$, use $10^4$ trials for each $(m,M_H)$, record the entangled-shot cost $\nu=\mathbb{E}[T]$, convert it to total photons $R=m\,\mathbb{E}[T]$, and estimate the log-log slopes by ordinary least squares.

\paragraph*{\bf Global single-scale aliasing dataset.}
To reveal the fringe ambiguity of a fixed-depth entangled sensor, we simulate a global prior $\theta-\tilde{\theta}_0\sim\mathrm{Uniform}[-\pi,\pi)$ at fixed $m=8$. The update rule is the same as in Eq.~\eqref{eq:method-local-update}, but the phase mismatch is wrapped modulo $2\pi$. This stylized model is sufficient to show the central point: terminal infidelity can continue to decrease even when the global phase RMSE no longer improves because the protocol may have locked onto an incorrect fringe.

\paragraph*{\bf Ideal multiscale dataset.}
For the multiscale main-text figure, stage $j$ uses entanglement depth $m_j=2^j$ with constant $M_H=320$. The hand-off to stage $j+1$ is implemented as $x^{\rm in}_{j+1}=2x^{\rm out}_j$ clipped to $[-\pi/2,\pi/2]$, which represents an ideal branch-resolved correction. This is intentionally a stylized resource-counting model rather than a complete branch-selection protocol. Its purpose is to isolate the asymptotic scaling predicted by Proposition~\ref{prop:multiscale-hl}, not to optimize a practical multiscale controller.

\appendix
\setcounter{figure}{0}
\renewcommand{\thefigure}{A\arabic{figure}}
\renewcommand{\theHfigure}{A\arabic{figure}}

%---------------------------------------------------------------------------------------------------------------------------------
\section{Run-length certification on a sensing manifold}
%---------------------------------------------------------------------------------------------------------------------------------

This appendix spells out in a more transparent way why the SSML halting rule is already a statistical statement. The idea is simple, but it is central enough to deserve a fuller discussion than the compressed derivation in the main text (for more details, see also Ref.~\cite{Bang2026run}).

%---------------------------------------------------------------------------------------------------------------------------------
\subsection{Why the terminal run is a valid object of inference}

In an adaptive protocol, one must normally be careful when turning outcomes into confidence statements, because the measurement settings may change from shot to shot. SSML has a special feature that resolves this concern at the moment of halting. During the final run of consecutive successes, no update is applied. The control is therefore fixed throughout the last $M_H$ shots. Whatever complicated adaptive history led to this point, the terminal run itself is generated by repeated Bernoulli sampling at one and the same success probability.

Let the final success probability be $p_s=1-\epsilon$. Then, the probability of observing a specific run of $M_H$ successes is simply
\begin{eqnarray}
\Prob(\underbrace{s\,s\,\cdots\,s}_{M_H\text{ times}}\mid\epsilon)=(1-\epsilon)^{M_H}.
\label{eq:app-run-prob}
\end{eqnarray}
Suppose one wishes to certify that the terminal infidelity is below some target value $\epsilon_0$. The null hypothesis is therefore $H_0:\epsilon\ge\epsilon_0$. Since the right-hand side of Eq.~(\ref{eq:app-run-prob}) decreases monotonically with $\epsilon$, the largest probability compatible with the null occurs at $\epsilon=\epsilon_0$. Hence,
\begin{eqnarray}
\sup_{\epsilon\ge\epsilon_0} \Prob(\underbrace{s\,s\,\cdots\,s}_{M_H}\mid\epsilon) = (1-\epsilon_0)^{M_H}.
\label{eq:app-np}
\end{eqnarray}
Rejecting the null at significance level $\eta$ means demanding that the right-hand side be at most $\eta$, which gives Eq.~(\ref{eq:eps-cert-exact}) of the main text.

This logic is exactly that of a one-sided sequential certificate. The point is not that SSML performs an externally designed hypothesis test after the fact. Rather, the protocol itself stores the relevant evidence in the variable it already tracks, namely the run length $M_S$. In that sense the certificate is intrinsic to the learning dynamics.

%---------------------------------------------------------------------------------------------------------------------------------
\subsection{Small-error expansion and the origin of the \texorpdfstring{$1/N$}{1/N} law}

The asymptotic form of the certificate follows from a one-line expansion. Since,
\begin{eqnarray}
\eta^{1/M_H}=e^{-\ln(1/\eta)/M_H},
\end{eqnarray}
we have
\begin{eqnarray}
1-\eta^{1/M_H} = \frac{\ln(1/\eta)}{M_H}-\frac{\ln^2(1/\eta)}{2M_H^2}+O(M_H^{-3}).
\label{eq:app-expansion}
\end{eqnarray}
Thus, the certificate scale is controlled by $1/M_H$ to leading order.

This observation also clarifies why the exponent $1$ repeatedly appears in SSML. Any protocol that halts on the basis of a run of $M_H$ identical outcomes can only certify a scale of order $1/M_H$, because that is the rate at which the null probability decays in the small-error regime. Since a halted trajectory necessarily uses at least $M_H$ copies, the terminal-run semantics itself excludes any parametrically faster certificate law than $O(1/N)$ in total consumed copies. Put differently, the $1/N$ law is already visible at the level of sequential evidence before one invokes Fisher information.

%---------------------------------------------------------------------------------------------------------------------------------
\subsection{From infidelity certificates to parameter certificates}

The remaining step is geometric rather than probabilistic. For a smooth pure-state family, the local mismatch between nearby states is measured by the QFI/Bures metric. Eq.~(\ref{eq:qfi-local-geometry}) says that, up to quartic corrections,
\begin{eqnarray}
\epsilon \simeq \frac{F_Q}{4}(\delta\lambda)^2.
\end{eqnarray}
Therefore, a certificate on $\epsilon$ immediately yields a certificate on $\delta\lambda$ after taking a square root. The parameter certificate of Proposition~\ref{prop:param-certificate} is precisely this conversion. The important conceptual point is that the halting rule and the local metric are complementary pieces of the same story: the run length determines how small the infidelity is certified to be, and the QFI tells us how that infidelity translates into a parameter error bar.

%---------------------------------------------------------------------------------------------------------------------------------
\section{Local QFI/Bures geometry and Fisher matching}
%---------------------------------------------------------------------------------------------------------------------------------

This appendix expands the information-geometric part of the argument and also explains why the monitored counter $M_S$ has direct metrological meaning.

%---------------------------------------------------------------------------------------------------------------------------------
\subsection{Local overlap and the QFI metric}

For a differentiable pure-state family $\ket{\psi_\lambda}$, the fidelity between neighboring states has the standard expansion~\cite{Braunstein1994,Paris2009}
\begin{eqnarray}
\abs{\braket{\psi_\lambda}{\psi_{\lambda+\delta\lambda}}}^2 = 1-\frac{F_Q(\lambda)}{4}(\delta\lambda)^2+O((\delta\lambda)^4),
\label{eq:app-qfi-expansion}
\end{eqnarray}
where
\begin{eqnarray}
F_Q(\lambda) =4\left(\braket{\partial_\lambda\psi_\lambda}{\partial_\lambda\psi_\lambda} - \abs{\braket{\psi_\lambda}{\partial_\lambda\psi_\lambda}}^2\right)
\end{eqnarray}
is the QFI. In a compensation family, the overlap entering the success probability is the overlap between the compensated state and the fiducial reference. Thus, the same local expansion applies to the return probability itself, leading to Eq.~(\ref{eq:qfi-local-geometry}) in the main text.

It is useful to define the dimensionless local coordinate
\begin{eqnarray}
y := \sqrt{F_Q}(\tilde{\lambda}-\lambda).
\end{eqnarray}
Then, the local infidelity is simply $1-p_s\simeq y^2/4$. This is the sense in which the return-probability landscape near the optimum is universal once distances are measured in the QFI metric.

%---------------------------------------------------------------------------------------------------------------------------------
\subsection{Detailed calculation of Fisher matching}

Starting from the local expansion
\begin{eqnarray}
p_s(x)=1-\frac{F_Q}{4}x^2+O(x^4),
\end{eqnarray}
we differentiate to obtain
\begin{eqnarray}
\partial_x p_s(x)=-\frac{F_Q}{2}x+O(x^3).
\end{eqnarray}
The numerator of the Bernoulli Fisher information is therefore
\begin{eqnarray}
[\partial_x p_s(x)]^2=\frac{F_Q^2}{4}x^2+O(x^4).
\end{eqnarray}
The denominator is
\begin{eqnarray}
p_s(x)[1-p_s(x)]
=\left(1-\frac{F_Q}{4}x^2+O(x^4)\right)
\left(\frac{F_Q}{4}x^2+O(x^4)\right)
=\frac{F_Q}{4}x^2+O(x^4).
\end{eqnarray}
Dividing the two gives
\begin{eqnarray}
I_{\rm cl}(x)=F_Q+O(x^2).
\end{eqnarray}
The local limit is therefore $I_{\rm cl}(0)=F_Q$.

This calculation shows that the apparent crudeness of one-bit readout is not the relevant quantity in the local regime. The success probability varies quadratically with the mismatch, and the derivative of that probability is exactly large enough to compensate for the Bernoulli variance. The resulting Fisher information is therefore finite and equal to the probe QFI to leading order.

%---------------------------------------------------------------------------------------------------------------------------------
\subsection{Why the monitored counter works}

The monitored proxy $\epsilon\approx (1+M_S)^{-1}$ can also be understood more transparently in the sensing language. Assume that over a short time window the control is nearly stationary so that the success probability is approximately fixed. Then, the number $L$ of consecutive successes before the next failure is a geometric random variable,
\begin{eqnarray}
\Prob(L=\ell\mid \epsilon)=(1-\epsilon)^\ell\epsilon \quad (\ell=0,1,2,\dots).
\end{eqnarray}
Its mean is
\begin{eqnarray}
\mathbb{E}[L\mid\epsilon]=\sum_{\ell\ge 0}\ell(1-\epsilon)^\ell\epsilon=\frac{1-\epsilon}{\epsilon}.
\end{eqnarray}
Thus, replacing the random run length $L$ by the observed counter $M_S$, we obtain the natural local estimate
\begin{eqnarray}
\epsilon\approx(1+M_S)^{-1}.
\end{eqnarray}
Combined with the local QFI/Bures relation, this yields Eq.~(\ref{eq:monitoring-proxy}) of the main text. In practice, the proxy is not exact because the control is still slowly adapting, but this derivation explains why it is so effective in both the numerical and the experimental SSML literature.

%---------------------------------------------------------------------------------------------------------------------------------
\section{Entangled probes and the meaning of quantum gain}
%---------------------------------------------------------------------------------------------------------------------------------

This appendix expands the scaling argument behind Corollary~\ref{cor:fixed-depth}. The main conceptual point is that ``quantum gain'' can mean different things depending on which resource is held fixed.

For the GHZ/NOON family of Eq.~(\ref{eq:noon-state}), one shot contains $m$ particles and has QFI $F_Q=m^2$. After $\nu$ independent entangled shots, the total Fisher information is $\nu m^2$. An efficient local estimator therefore has variance scaling as
\begin{eqnarray}
\Var(\tilde{\theta})=O\!\left(\frac{1}{\nu m^2}\right).
\end{eqnarray}
If one keeps $m$ fixed and increases the number of shots, the total particle budget is $R=m\nu$ and the RMSE behaves as
\begin{eqnarray}
\Delta\theta=O\!\left(\frac{1}{\sqrt{Rm}}\right).
\end{eqnarray}
Thus, the fixed-depth entanglement changes the prefactor at fixed total resource, but not the $R^{-1/2}$ slope. This is the reason why the main-text numerical result is a preserved quantum gain rather than a Heisenberg slope in $R$.

The distinction matters because it tells us what to expect from Fig.~\ref{fig:noon-main}(b). The SQL-like slope is not a failure of SSML. It is exactly the correct scaling law when the entanglement depth is kept fixed while the total particle number is varied. What the estimator must preserve in that setting is the downward shift of the RMSE curves with increasing $m$, and that is what the simulations show.

To obtain Heisenberg scaling in the total resource, the entanglement depth itself must grow with the stage index or with the overall resource. That is the role of the multiscale architecture, discussed in the next appendix section.

%---------------------------------------------------------------------------------------------------------------------------------
\section{Ideal multiscale architecture and resource counting}
%---------------------------------------------------------------------------------------------------------------------------------

The multiscale idea can be understood as a deterministic version of the acquisition--lock picture. A small-$m$ stage has coarse fringes and therefore a large capture range. A large-$m$ stage has narrow fringes and therefore high local sensitivity. Using both in sequence, we can resolve the tension between global identifiability and local precision.

Suppose that the stage $j$ uses entanglement depth $m_j=2^j$ and is run until the residual phase is reduced to order $c/m_j$ for some constant $c<\pi/2$. Because the stage then operates in a local basin, the run-length threshold required to certify this residual accuracy is only $O(1)$ in the idealized regime. Consequently, the number of entangled shots consumed at stage $j$ is $O(1)$, while the particle cost per shot is $m_j$. The stage cost is therefore
\begin{eqnarray}
R_j=O(m_j).
\end{eqnarray}
Summing over stages up to $J$, we attain the geometric total
\begin{eqnarray}
R_{\rm tot}=\sum_{j=0}^J O(m_j)=O(m_J).
\end{eqnarray}
The final residual after stage $J$ is $O(1/m_J)$, so the final RMSE is
\begin{eqnarray}
\Delta\theta_J=O(1/m_J)=O(R_{\rm tot}^{-1}).
\end{eqnarray}
This is the precise content of Proposition~\ref{prop:multiscale-hl}.

Of course, a practical implementation would still need an explicit and robust branch-selection rule for the hand-off between scales. The point of the proposition is not to hide that difficulty, but to separate it from the role of SSML itself. Once branch resolution is supplied---by a coarse stage, by prior information, or by a more elaborate acquisition scheme---the SSML estimator layer remains fully compatible with the asymptotic Heisenberg resource law.

\bibliography{ref_SSML-sensing}

\end{document}